\newtheorem{theorem}{Theorem}
\newtheorem{case}{Case}
\newtheorem{definition}{Definition}
\newtheorem{example}{Example}
\newtheorem{lemma}{Lemma}
\newtheorem{proposition}{Proposition}
\newtheorem{remark}[theorem]{Remark}
\newenvironment{proof}[1][Proof]{\noindent\textbf{#1.} }{\ \rule{0.5em}{0.5em}}
\date{}
\DeclareMathOperator\inte{int} 
\begin{document}

\title{Risk and Intertemporal Preferences over Time Lotteries}
\author{Minghao Pan \thanks{UCLA. email: \href{mailto:minghaopan@ucla.edu}{minghaopan@ucla.edu}}}
\maketitle
\begin{abstract}
This paper studies relations among axioms on individuals' intertemporal
choices under risk. The focus is on Risk Averse over Time Lotteries (RATL),
meaning that a fixed prize is preferred to a lottery with the same monetary
prize but a random delivery time. Though surveys and lab experiments documented RATL
choices, Expected Discounted Utility cannot accommodate any RATL. This paper's contribution is two-fold. First, under a very weak form of Independence,
we generalize the incompatibility of RATL with two axioms about
intertemporal choices: Stochastic Impatience (SI) and No Future Bias. Next, we
prove a representation theorem that gives a class of models satisfying RATL
and SI everywhere. This illustrates that there is no fundamental conflict
between RATL and SI, and leaves open possibility that RATL behavior is
caused by Future Bias.\\
\vspace{0in}\\
\noindent\textbf{Keywords:} Intertemporal choices, Time lotteries, Stochastic Impatience, Discount rate, Future Bias\\
\end{abstract}

\let\thefootnote\relax\footnotetext{%
I would like to express my deepest appreciation to Jay Lu for his guidance and encouragement throughout the project. I would like to thank
Jiayin Zhai, as well as seminar participants at Caltech, for their helpful comments. I am very grateful to Victoria Wang whose feedback significantly improved the readability of the paper. I appreciate Huihuang Zhu for providing technical support. The author declares that he has no relevant or material financial interests that relate to the research described in this paper.}

\section{Introduction}

The study of intertemporal choices under risk has a long history.
Traditionally, the Expected Discounted Utility (EDU) model is widely used to
predict individuals' intertemporal choices under risk thanks to its
simplicity and tractable function form. EDU inherits the shortcomings of
both Expected Utility and Discounted Utility, so absurdities caused by EDU
have motivated the search for alternative models. Popular extensions of
Discounted Utility include the quasi-hyperbolic discounting (\cite{PP68}) and the generalized hyperbolic discounting (\cite{LP92}), not to mention the numerous generalizations to Expected Utility.

An absurdity about EDU concerns time lotteries - those with fixed monetary
prizes and random delivery time. Time lotteries arise naturally in
practice, for example, when a manufacturer sets a fixed price for its
product but is unsure about when the product will be sold, or when a
passenger waits for a taxi that has fixed fees. Borrowing from the study of
risk aversion over random monetary payoffs, we define a choice to be Risk Seeking
over Time Lotteries (RSTL) if a time lottery is preferred to its certain
counterpart -- the lottery with the same monetary prize and a fixed delivery
time as the expected delivery time of the time lottery. Otherwise, the
choice is said to be Risk Averse over Time Lotteries (RATL).

It has been observed that EDU as well as commonly used Expected Utility
models with alternative discount functions predict that decision makers are
always RSTL (\cite{CV03}, \cite{OO07}).
Whenever a preference is represented by $V(p)=\mathbb{E}_{p}[D(t)v(x)]$,
where $D$ can be regarded as the discount function on time and $v$ is the
evaluation of monetary prizes, the convexity of $D$ implies RSTL. On the other
hand, surveys and lab experiments have found evidence in support of RATL in some
situations (\cite{CV03}; \cite{OO07}; \cite{DDGO20}).

What is more puzzling is an incompatibility result by \cite{DDGO20}. They put forward the assumption of Stochastic Impatience (SI); this property requires that, when pairing two monetary prizes with two delivery times, individuals like to match the higher monetary prize with the earlier delivery time. Moreover, they relaxed the assumption of Stationarity in time to No
Future Bias, which means the marginal rate of substitution of time for money is decreasing in time. Assuming Independence, No Future Bias, and other
standard assumptions, DeJarnette et al. found that SI implies RSTL.

In this article, we demonstrate the incompatibility among a weaker notion of
RATL, SI, and No Future Bias when Independence is further relaxed.
Specifically, this illustrates the incompatibility of RATL, SI, and No Future
Bias beyond Expected Utility and Generalized Local Bilinear Utility
considered in \cite{DDGO20}. The weak independence we endow is
strictly weaker than both the Axiom of Degenerate Independence (\cite{GKP92}) and the Certainty Independence Axiom (\cite{GS89}).

Given that our incompatibility result does not assume Independence, we are
motivated to reconcile SI with RATL by dropping No Future Bias. We prove a
representation theorem that gives conditions for an Expected Utility model to
satisfy SI and RATL everywhere. The representation theorem results in a
functional form of 
\begin{equation*}
V(p)=\mathbb{E}_{p}[\phi (D(t)v(x))].
\end{equation*}%
This functional form includes Generalized EDU considered in \cite{DDGO20} as a special case, in which $D(t)=\beta ^{t}$. Moreover, compared
to the standard model $V(p)=\mathbb{E}_{p}[D(t)v(x)]$, the additional
curvature provided by $\phi $ governs risk attitude, which makes RATL easier
to be attained. In particular, this representation theorem shows there is no
fundamental contradiction between SI and RATL, even within Expected Utility.
We conclude that RATL may be accommodated by allowing some Future Bias.

\section{Related Literature}

\cite{CV03} were the first to notice that EDU implies RSTL as
long as the discount rate is positive. This result was generalized by \cite{OO07}, who demonstrated RSTL for Discounted Utility
models with convex discount functions.

SI was first studied in \cite{DDGO20}, and in the same paper they proved the incompatibility of RATL with SI and No Future Bias in the
Expected Utility models and Generalized Local Bilinear Utility (GLBU) models. They discussed two solutions to accommodate RATL.
The first one is to drop  ``Intertemporal''
Independence and keep SI; the Dynamic Ordinal Certainty Equivalent model (\cite{Se78}, \cite{SS78}) may reconcile SI with RATL because
of the  ``order of aggregation.'' The second
solution is to drop SI. In this case, the Generalized Expected Discounted
Utility model

\begin{equation*}
V(p)=\mathbb{E}_{p}[\phi \left( \beta ^{t}u(x)\right) ]
\end{equation*}%
is able to satisfy RATL when the curvature of $\phi $ is large enough.

The surveys and lab experiments conducted by \cite{CV03}, \cite{OO07} and \cite{DDGO20} to test RATL usually let participants choose from a time lottery and its certain counterpart. \cite{CV03} conducted a survey among business owners and managers and found that a majority chose the risky option. The survey conducted by \cite{OO07} among 146 business managers revealed a preference for the certain option, especially when there is a small probability of very early arrival. \cite{DDGO20} run a lab experiment among 196 subjects, which concluded that the majority of subjects made the RATL choices in
the majority of questions. As for SI, \cite{LMQZ20} found evidence in favor of SI, especially among those whose choices are in
accordance with Impatience and Lottery Equivalence.

No Future Bias is the key axiom that separates our results from previous
studies of RATL. This notion, first proposed by \cite{Pr04} under the name  ``decreasing impatience,'' has been controversial. One piece of evidence supporting No Future Bias is
preference reversals (\cite{GFM94}, \cite{KH95}), which means subjects initially prefer a small but early reward
over a large but late reward, and after equally delaying the two prizes, the
preference relation is reversed.

On the other hand, some recent lab experiments and surveys find evidence
against No Future Bias, especially when the time delay between the first and the second test is small. \cite{SO09} and \cite{ABRW10} documented the
converse of preference reversals: a smaller and sooner prize is more likely to
be preferred to a larger and later prize after equally delaying the two prizes. \cite{Ta11} took a non-parametric approach and found that most subjects are Future Biased.

When assuming separability between time and money, decision makers' discount rate in a given time is independent of the amount of prize, and this is characterized by a discount
function $D(t)$. In this case, No Future Bias is equivalent to a declining
discount rate, also called ``hyperbolic
discounting.'' Previous findings of time inconsistencies
seemed so strong that hyperbolic discounting soon became popular. Two
commonly used hyperbolic discount functions are the standard one, $D(t)=\frac{1}{1+kt},k>0$, and the
quasi-hyperbolic discounting, $D(t)=\beta \delta ^{t},\delta <1$, if $t>0$. There have
been numerous tests of hyperbolic discounting, yielding inconsistent and
inconclusive results. Additional discussions can be found in \cite{Ru03}, \cite{BBS10}, and \cite{Ha15}.

Many utility representations have been proposed other than the standard one, $%
V(p)=\mathbb{E}_{p}[D(t)v(x)]$. \cite{FR82} used a
``utility independence'' condition which yields $%
V(p)=\mathbb{E}_{p}[D(t)v(x)+\omega (t)]$. \cite{MPST21} proposed a utility representation on time lotteries, assuming a
strong form of stationarity. However, the standard model always satisfies SI and the model of \cite{FR82} in general does not have a clean characterization of SI. The representation of \cite{DDGO20}, $V(p)=%
\mathbb{E}_{p}[\phi (\beta ^{t}u(x))]$, has the advantage that SI is captured by the curvature of $\phi$. In fact, their model is a special case of ours; our model allows any reasonable discount function $D(t)$ in place of $\beta^t$, thus providing more flexibility. Many more
papers, including \cite{Ep83}, \cite{Bl20}, and \cite{DGO20}, study preferences over lotteries over consumption streams.

\iffalse
Theoretically, Read (2001) argued that many observations consistent with
hyperbolic discounting are due to subadditive discounting, which means that
discounting over an interval would be greater if the interval is divided
into subintervals. Read, Frederick, and Rahman (2005) proposed that framing
effects account for the observed hyperbolic discount. Both works validated
Future Bias by experiments after eliminating the effects of subadditive
discounting and framing respectively. Loewenstein's theory of anticipation
(1987). \fi

\section{Setup}

Let $X\subset \mathbb{R}^{+}$ be a closed interval of monetary prizes and $%
T\subset \mathbb{R}^{+}\cup \left\{ 0\right\} $ be a closed interval of
delivery dates, both possibly unbounded. A pair $\left( x,t\right) $ means $%
x $ amount of money is received at time $t$. Denote by $\Delta \left(
X\times T\right) $ the set of simple lotteries on $X\times T$ so these
lotteries are random in both quantity and time. We will use $p$ to denote a
lottery in $\Delta \left( X\times T\right) $. This paper studies preference
relations $\gtrsim $ over $\Delta \left( X\times T\right) $.

We can fix the quantity dimension and only consider variation over the
delivery time. For any $x\in X$,\ we say $p_{x}\in \Delta \left( X\times
T\right) $ is a \textbf{time lottery with prize} $x$ if $p_{x}$ is supported
on $\left\{ x\right\} \times T$. Let 
\begin{equation*}
\bar{t}_{p_{x}}:=\sum t\cdot p_{x}\left( x,t\right)
\end{equation*}%
be the expected arrival time of a time lottery.

In analogy to the standard risk aversion over random monetary prizes, we
introduce the notion of risk aversion in the time dimension.

\begin{definition}
\ A preference relation $\gtrsim $ over $\Delta \left( X\times T\right) $ is 
\textbf{Risk Averse over Time Lotteries} (RATL) if for any time lottery $%
p_{x}$, $\delta _{\left( x,\bar{t}_{p_{x}}\right) }\gtrsim p_{x}$.
Similarly, a relation $\gtrsim $ is \textbf{Risk Seeking over Time Lotteries
(}RSTL) if for any time lottery $p_{x}$, $\delta _{\left( x,\bar{t}%
_{p_{x}}\right) }\lesssim p_{x}$.
\end{definition}

By restricting attention to binary lotteries with two events of probability $%
\frac{1}{2}$, we have a weaker property that plays an important role in the
following discussions.

\begin{definition}
We say $\gtrsim $ satisfies \textbf{weak RATL} if for any $x\in X$ and $%
t_{1},t_{2}\in T$, 
\begin{equation*}
\delta _{(x,\frac{t_{1}+t_{2}}{2})}\gtrsim \frac{1}{2}\delta _{(x,t_{1})}+%
\frac{1}{2}\delta _{(x,t_{2})}.
\end{equation*}%
Similarly, $\gtrsim $ is\textbf{\ weak RSTL} if the above holds with $%
\lesssim $ in place of $\gtrsim $.
\end{definition}

\iffalse
Over the past two decades, surveys and lab experiments on time lotteries
support that people are often RATL (\cite{CV03}; \cite{OO07}; \cite{DDGO20}). While results vary from one to the
other, most experiments document that people tend to avoid an unusually late
arrival that happens with only a tiny probability. \cite{OO07} recorded choices over time lotteries versus fixed prizes in a survey.
The unit of monetary prizes was New Turkish Lira and 160 NTL were
approximately 100 Euro. The delivery time was measured in months.

\begin{case}
$%
\begin{array}{ccc}
\text{Options} & 0.9\delta _{(160,1)}+0.1\delta _{\left( 160,11\right) } & 
\delta _{\left( 160,2\right) } \\ 
\text{Percent} & 25 & 75%
\end{array}%
$
\end{case}

\begin{case}
$%
\begin{array}{ccc}
\text{Options} & 0.5\delta _{(160,1)}+0.5\delta _{\left( 160,11\right) } & 
\delta _{\left( 160,6\right) } \\ 
\text{Percent} & 38 & 62%
\end{array}%
$
\end{case}

\begin{case}
$%
\begin{array}{ccc}
\text{Options} & 0.2\delta _{(160,1)}+0.8\delta _{\left( 160,11\right) } & 
\delta _{\left( 160,9\right) } \\ 
\text{Percent} & 78 & 22%
\end{array}%
$
\end{case}

\cite{DDGO20} conducted lab experiments, asking similar questions
as above. The delivery time was measured in weeks (usually within 10 weeks).
They found that the majority of subjects chose the deterministic options in
the majority of questions.
\fi

\section{Incompatibility Theorem}

The standard Expected Discounted Utility (EDU) model evaluates a lottery $%
p\in \Delta \left( X\times T\right) $ by 
\begin{equation*}
V(p)=\mathbb{E}_{p}[\beta ^{t}v(x)],
\end{equation*}%
where $\beta \in (0,1)$ and $v$ is non-negative and strictly increasing.
Despite the evidence of RATL in some situations, EDU predicts individuals
are always RSTL. A slightly more general model has the functional form 
\begin{equation*}
V(p)=\mathbb{E}_{p}[D(t)v(x)],
\end{equation*}%
where $D$ is often called the discount function. Using Jensen's inequality,
we easily deduce that this model obeys RSTL if $D$ is convex, which is the
case of the generalized quasi-hyperbolic discount function $D(t)=(1+\alpha
t)^{-\beta /\alpha }$, $\alpha ,\beta >0$.

This discrepancy between the observed RATL behavior and no accommodation of RATL
in commonly used theoretical models is further generalized by \cite{DDGO20}. They introduced the notion of Stochastic Impatience.

\begin{definition}
A preference\ relation $\gtrsim $ satisfies \textbf{Stochastic Impatience }%
(SI) if for any $t_{1}<t_{2}\in T$ and $x_{1}>x_{2}\in X$, we have%
\begin{equation*}
\frac{1}{2}\delta _{(x_{1},t_{1})}+\frac{1}{2}\delta _{(x_{2},t_{2})}\gtrsim 
\frac{1}{2}\delta _{(x_{1},t_{2})}+\frac{1}{2}\delta _{(x_{2},t_{1})}.
\end{equation*}%
In other words, individuals like to match the earlier arrival time with the
higher prize.
\end{definition}

Clearly, if $\gtrsim $ can be represented by 
\begin{equation*}
V(p)=\mathbb{E}_{p}[D(t)v(x)],
\end{equation*}%
where $D\in (0,1]$ is monotone decreasing and $v$ is non-negative and
monotone increasing, then $\gtrsim $ satisfies SI.

Dynamic inconsistency observed in experiments initiated search for alternative
models, and many forms of hyperbolic discount functions emerged. \cite{Pr04} suggested it is one common property behind these models that drives dynamic inconsistency, which he termed as ``decreasing
impatience.'' We instead use the name coined by \cite{DDGO20}.

\begin{definition}
A preference\ relation $\gtrsim $ satisfies \textbf{No Future Bias} if for
all $x,y\in X$, $s,t\in T$ with $t<s$ and $\tau >0$, 
\begin{equation*}
\delta _{(x,t)}\sim \delta _{(y,s)}\Longrightarrow \delta _{(x,t+\tau
)}\lesssim \delta _{(y,s+\tau )}.
\end{equation*}%
For some $x,y,s,t$ as above, we say the choice has \textbf{Future Bias} if the
above $\lesssim $ is replaced by $\succ $.
\end{definition}

\cite{DDGO20} proved an incompatibility result among SI, No
Future Bias, and RATL. The following are standard assumptions on $\gtrsim $
we will use later.

\textbf{Axiom 0} Completeness and Transitivity.

\textbf{Axiom 1} Outcome Monotonicity: For all $x,y\in X$ and $s\in T$, if $%
x>y$ then $\delta _{(x,s)}\succ \delta _{(y,s)}$.

\textbf{Axiom 2} Impatience: For all $x\in X$ and $s,t\in T$, if $t<s$ then $%
\delta _{(x,t)}\succ \delta _{(x,s)}$.

\textbf{Axiom 3} Continuity: For all $p\in \Delta $, the sets $\left\{ q\in
\Delta :p\gtrsim q\right\} $ and $\left\{ q\in \Delta :q\gtrsim p\right\} $
are weakly closed.

\begin{theorem}
\label{DeJ}(Theorem 2 in \cite{DDGO20}) Suppose that $\gtrsim $
admits a GLBU representation, i.e. there exists a function $u$ defined on $X\times T$
and a constant $\pi (\frac{1}{2})$ such that whenever a lottery $p=\frac{1}{2}\delta
_{(x,t)}+\frac{1}{2}\delta _{(x^{\prime },t^{\prime })}$ with $u(x,t)\geq
u(x^{\prime },t^{\prime })$, $p$ is evaluated by
\begin{equation*}
V(p)=\pi (\frac{1}{2})u(x,t)+\left( 1-\pi (\frac{1}{2})\right) u(x^{\prime
},t^{\prime }).
\end{equation*}%
Then Axioms 0-2, SI, and No Future Bias imply weak RSTL.
\end{theorem}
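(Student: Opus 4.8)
The plan is to argue entirely through the GLBU representation, reducing weak RSTL to a single inequality and then squeezing it between one application of SI and one application of No Future Bias. First I would record the elementary consequences of the representation: taking both branches equal gives $V(\delta_{(x,t)})=u(x,t)$, so $u$ represents $\gtrsim$ on degenerate outcomes, and Axioms~1--2 make $u$ strictly increasing in the prize and strictly decreasing in time. Write $\pi:=\pi(\tfrac12)\in[0,1]$ (a mixture weight). Fix a prize $x$ and times $t_1<t_2$ with midpoint $m:=\tfrac{t_1+t_2}{2}$. Since $u(x,t_1)\ge u(x,t_2)$, the time lottery is valued at $\pi u(x,t_1)+(1-\pi)u(x,t_2)$, so weak RSTL is exactly
\[
u(x,m)\ \le\ \pi\,u(x,t_1)+(1-\pi)\,u(x,t_2).\qquad(\star)
\]

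The crux is to introduce one auxiliary prize that ties SI and No Future Bias to a common indifference curve. By Outcome Monotonicity and Continuity, I would choose $y<x$ with $\delta_{(y,t_1)}\sim\delta_{(x,m)}$, i.e.
\[
u(y,t_1)=u(x,m).\qquad(0)
\]
This single choice does double duty. Applying No Future Bias to $(0)$ with delay $\tau=m-t_1>0$ and using $m+\tau=t_2$ gives $\delta_{(y,m)}\lesssim\delta_{(x,t_2)}$, that is
\[
u(y,m)\ \le\ u(x,t_2).\qquad(1)
\]
Applying SI to the prizes $x>y$ and times $t_1<m$ gives
\[
\tfrac12\delta_{(x,t_1)}+\tfrac12\delta_{(y,m)}\ \gtrsim\ \tfrac12\delta_{(x,m)}+\tfrac12\delta_{(y,t_1)}.
\]
The left lottery is valued $\pi u(x,t_1)+(1-\pi)u(y,m)$, since $(x,t_1)$ beats $(y,m)$ on both coordinates; by $(0)$ the two branches of the right lottery are indifferent, so its value collapses to $u(x,m)$ irrespective of tie-breaking. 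Hence SI reads
\[
u(x,m)\ \le\ \pi\,u(x,t_1)+(1-\pi)\,u(y,m).\qquad(2)
\]

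Chaining $(2)$ then $(1)$, and using $1-\pi\ge0$,
\[
u(x,m)\ \le\ \pi\,u(x,t_1)+(1-\pi)\,u(y,m)\ \le\ \pi\,u(x,t_1)+(1-\pi)\,u(x,t_2),
\]
which is precisely $(\star)$; thus weak RSTL holds.

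The step I expect to be genuinely delicate is not the algebra but the discovery of the construction $(0)$: the compensating prize $y$ must be placed so that $(x,m)$ and $(y,t_1)$ lie on one indifference curve, because this is exactly what makes the crossed SI lottery collapse to $u(x,m)$ while also being the indifference that No Future Bias can translate forward from $t_1$ to $m$, replacing $y$ by the larger $x$ in the later slot as in $(1)$. The only technical wrinkle is the existence of such $y$ in $X$: it requires the prize interval to be rich enough to compensate the delay from $t_1$ to $m$ (equivalently, that $u(x,m)$ lies in the range of $u(\cdot,t_1)$), which Continuity and the interval structure secure for interior configurations, with boundary cases handled by an approximation argument.
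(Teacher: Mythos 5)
Your core argument is correct and is, in essence, the standard mechanism: construct a compensating prize $y<x$ with $\delta_{(y,t_1)}\sim\delta_{(x,m)}$, push this indifference forward by $\tau=m-t_1$ via No Future Bias to get $u(y,m)\le u(x,t_2)$, use SI together with the tie-collapse of the right-hand lottery to get $u(x,m)\le\pi u(x,t_1)+(1-\pi)u(y,m)$, and chain. The directions of SI and No Future Bias are applied correctly, and the ranking of branches needed to evaluate each GLBU lottery ($u(x,t_1)>u(y,m)$, and the tie in the crossed lottery) is verified by Impatience and Monotonicity. Note that this paper does not actually prove this theorem (it is imported from DeJarnette et al.); the closest in-paper argument is the proof of Theorem \ref{incompatibility}, which runs on exactly your chain, with WCI doing the job that direct GLBU evaluation does for you (adding $\tfrac12\delta_{(x,t-\tau)}$ to both sides and collapsing the indifferent pair).

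The genuine gap is the existence of the compensating prize $y$, which you dismiss as a technical wrinkle. First, the theorem as stated assumes only Axioms 0--2, not Continuity (Axiom 3), so the selection of $y$ by a continuity argument is not licensed by the hypotheses you were given. Second, and more substantively, even granting Continuity the construction can fail wholesale rather than at a ``boundary'': if $X$ is a short interval and the delay $m-t_1$ is large, then $u(x,m)<u(\min X,t_1)$, so no $y\in X$ satisfies $(0)$, and there is nothing to approximate --- No Future Bias needs an exact indifference to bite, and the collapse in your step $(2)$ needs it too. This obstruction is precisely why the paper's own Theorem \ref{incompatibility} is stated \emph{locally}: its preliminary lemma (using Axiom 3) produces a bound $s>0$ on the admissible delay $\tau$ exactly so that the compensating prize exists, and the paper explicitly contrasts its local conclusion with the everywhere conclusion of the present theorem. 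To obtain weak RSTL for \emph{all} $t_1<t_2$ as claimed, you would need either a richness assumption on the prize domain (as in DeJarnette et al.'s setting) or an additional local-to-global step, e.g.\ subdividing $[t_1,t_2]$ into subintervals short enough that compensation is possible and then aggregating the resulting inequalities; your proposal supplies neither.
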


\bigskip

GLBU is a common feature in a wide scope of behavioral models. However, we
demonstrate that if we stick to No Future Bias, then the incompatibility
between SI and RATL extends to more general models. More specifically, we
relax GLBU to a weaker axiom.

\begin{definition}
A preference relation $\gtrsim $ is said to satisfy \textbf{Weak Certainty
Independence} (WCI) if for any%
\begin{equation*}
p=\delta _{(x,t)},q=\delta _{(y,s)},r=\delta _{(z,u)},
\end{equation*}%
we have 
\begin{equation}
p\gtrsim q\Longrightarrow \frac{1}{2}p+\frac{1}{2}r\gtrsim \frac{1}{2}q+%
\frac{1}{2}r.  \label{WCI}
\end{equation}
\end{definition}

WCI is strictly weaker than the Certainty Independence introduced by \cite{GS89}, which allows $p,q$ in (\ref{WCI}) to be any lottery not
necessarily certain. Moreover, WCI is strictly weaker than the Axiom of
Degenerate Independence studied by \cite{GKP92}, which
does not require $r$ to be a certain prize. Clearly, if a model admits a GLBU representation, then it satisfies WCI.

Our incompatibility result is as follows:

\begin{theorem}
\label{incompatibility}Suppose $\gtrsim $ satisfies Axioms 0-3, SI, and No
Future Bias. Then for any $\left( x,t\right)$ in $\inte(X\times T)$, there
exists $s>0$ such that for any $\tau \in (0,s)$, we have 
\[\frac{1}{2}\delta
_{(x,t-\tau )}+\frac{1}{2}\delta _{(x,t+\tau )}\gtrsim \delta _{(x,t)}.\] In
other words, SI and No Future Bias imply weak RSTL in the local sense.
\end{theorem}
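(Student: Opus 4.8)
The plan is to reduce the claim to a chain of comparisons driven by the only axiom that ranks genuinely random prospects, namely Stochastic Impatience, using the weak independence (WCI) the paper endows as the substitution tool that converts two-prize comparisons into single-prize time lotteries. Since SI necessarily involves two distinct prizes, whereas both the target lottery $\frac12\delta_{(x,t-\tau)}+\frac12\delta_{(x,t+\tau)}$ and the sure thing $\delta_{(x,t)}$ involve only the prize $x$, the first move is to manufacture a companion prize $y$ together with two time-indifferences, apply SI once to a pair built from $x$ and $y$, and then use WCI to replace the degenerate $y$-outcomes by indifferent degenerate $x$-outcomes, so that both sides of the SI inequality collapse onto the prize $x$.

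Concretely, fix an interior $(x,t)$ and a small $\tau>0$. Using Continuity, Impatience, and Outcome Monotonicity I would first define a prize $y$ by the indifference $\delta_{(x,t-\tau)}\sim\delta_{(y,t)}$; since $t-\tau<t$ forces $\delta_{(x,t-\tau)}\succ\delta_{(x,t)}$, such a $y$ exists, satisfies $y>x$, and tends to $x$ as $\tau\to0$. Next I would define $\beta>0$ by the indifference $\delta_{(y,t+\beta)}\sim\delta_{(x,t)}$. Applying SI to the prizes $y>x$ at the times $t<t+\beta$ gives
\[
\tfrac12\delta_{(y,t)}+\tfrac12\delta_{(x,t+\beta)}\gtrsim\tfrac12\delta_{(y,t+\beta)}+\tfrac12\delta_{(x,t)}.
\]
Substituting $\delta_{(y,t)}\sim\delta_{(x,t-\tau)}$ on the left and $\delta_{(y,t+\beta)}\sim\delta_{(x,t)}$ on the right by WCI (applied in both directions for each indifference, with the respective remaining outcome as the common component $r$) collapses both sides onto the prize $x$ and yields the asymmetric inequality $\frac12\delta_{(x,t-\tau)}+\frac12\delta_{(x,t+\beta)}\gtrsim\delta_{(x,t)}$.

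It remains to pass from the asymmetric spread $\{t-\tau,t+\beta\}$ to the symmetric one $\{t-\tau,t+\tau\}$, and this is exactly where No Future Bias enters. Reading the indifference $\delta_{(x,t-\tau)}\sim\delta_{(y,t)}$ through No Future Bias (delaying both outcomes by $\tau$) gives $\delta_{(x,t)}\lesssim\delta_{(y,t+\tau)}$; combined with $\delta_{(y,t+\beta)}\sim\delta_{(x,t)}$ this forces $\delta_{(y,t+\beta)}\lesssim\delta_{(y,t+\tau)}$, so Impatience yields $\beta\ge\tau$. Hence the late outcome $t+\tau$ of the target lottery is weakly preferred to $t+\beta$, and a final application of WCI (improving one outcome of a fifty-fifty mixture, with $\delta_{(x,t-\tau)}$ as the common component) gives $\frac12\delta_{(x,t-\tau)}+\frac12\delta_{(x,t+\tau)}\gtrsim\frac12\delta_{(x,t-\tau)}+\frac12\delta_{(x,t+\beta)}\gtrsim\delta_{(x,t)}$, which is the assertion.

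The main obstacle I anticipate is structural rather than computational: because no independence is available beyond WCI, SI is the sole lever for ranking lotteries, and the whole argument hinges on legitimately substituting indifferent degenerate outcomes inside fifty-fifty mixtures so as to turn the unavoidable two-prize SI inequality into a statement about single-prize time lotteries. Getting the asymmetry in the right direction is the second delicate point: No Future Bias must deliver $\beta\ge\tau$ rather than $\beta\le\tau$, since only then does the constructed lottery dominate the symmetric target. Finally, the locality in the statement (the existence of $s>0$ with the conclusion for all $\tau\in(0,s)$) is forced by the construction, being needed precisely to guarantee $y\in X$ and $t-\tau,\,t+\beta\in T$; this holds for small $\tau$ thanks to interiority together with the fact that $y\to x$ and $\beta\to0$ as $\tau\to0$.
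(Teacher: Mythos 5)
Your proof is correct, and it uses the same basic ingredients as the paper's proof---one application of SI, substitution of indifferent degenerate outcomes via WCI, and one application of No Future Bias---but assembles them in a genuinely different, roughly mirror-image way. The paper constructs a single auxiliary \emph{worse} prize $y<x$ with $\delta_{(y,t-\tau)}\sim\delta_{(x,t)}$, applies No Future Bias \emph{first} to obtain $\delta_{(x,t+\tau)}\gtrsim\delta_{(y,t)}$, feeds that ranking into WCI, and then applies SI on the box $\{y,x\}\times\{t-\tau,t\}$, so the symmetric lottery $\frac12\delta_{(x,t-\tau)}+\frac12\delta_{(x,t+\tau)}$ appears at once. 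You instead construct a \emph{better} prize $y>x$ with $\delta_{(y,t)}\sim\delta_{(x,t-\tau)}$ plus a second indifference point $t+\beta$, apply SI first on the box $\{x,y\}\times\{t,t+\beta\}$, collapse both sides by WCI to get the asymmetric spread $\{t-\tau,t+\beta\}$, and only then invoke No Future Bias---not to rank two degenerate lotteries directly, but to sign the asymmetry $\beta\geq\tau$, after which Impatience and one more WCI step finish the argument; this has the virtue of isolating exactly what No Future Bias buys. The cost is that you need two existence arguments (for $y$ and for $\beta$) where the paper's lemma needs one, and the existence of $\beta$ is the one under-argued spot: saying ``$\beta\to0$ as $\tau\to0$'' presupposes $\beta$ exists. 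The clean fix is to pick $t'>t$ in $T$, note $\delta_{(x,t')}\prec\delta_{(x,t)}$ by Impatience, and use Continuity (openness of the strict lower contour set of $\delta_{(x,t)}$) together with $y\to x$ to conclude $\delta_{(y,t')}\prec\delta_{(x,t)}$ for small $\tau$, whence some $\beta\in(0,t'-t]$ exists by connectedness---the same style of argument as the paper's own existence lemma. Finally, note that, exactly like the paper's proof, your argument relies on WCI even though the theorem's statement lists only Axioms 0-3, SI, and No Future Bias; that hypothesis is clearly intended from the surrounding text, so assuming it is appropriate.
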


Although the conclusion of Theorem \ref{incompatibility} is local in
contrast to everywhere RSTL in Theorem \ref{DeJ}, our incompatibility
theorem includes substantially more models than DeJarnette's et al. To see
this, we note that WCI means a binary lottery $\frac{1}{2}\delta _{(x,t)}+%
\frac{1}{2}\delta _{(y,s)}$ is evaluated by 
\begin{equation*}
f(u(x,t),u(y,s))
\end{equation*}%
for some utility function $u$ evaluating fixed prizes and some $f$ that
aggregates these two possibilities. Besides monotonicity, there is no
restrictions on $f$ in the models covered by Theorem \ref{incompatibility}. On the other hand, GLBU requires $f$ takes the form
\begin{equation*}
f(m,n)=a\max u(m,n)+(1-a)\min u(m,n)
\end{equation*}%
for some constant $a\in (0,1)$.

The flexibility of $f$ resulting from relaxing GLBU to WCI allows us to
apply the incompatibility result to more models. In reference point models,
individuals' evaluation of an outcome can depend on other factors such as
their evaluation of other possible outcomes. For example, \cite{LS86} proposed a disappointment model in which individuals compare the
actual outcome to a prior expectation. When adapting to our setting, their
model becomes%
\begin{equation*}
V(p)=\mathbb{E}_{p}\left[ u(x,t)+R(u(x,t)-\bar{u})\right] ,
\end{equation*}%
where $\bar{u}$ denotes the prior expectation and $R$ is a increasing
function satisfying $R(0)=0$. This disappointment model rarely satisfies
GLBU. However, with many natural choices of $\bar{u}$ (e.g. $\bar{u}$ is a
constant or $\bar{u}=\mathbb{E}_{p}u$), the disappointment model satisfies
WCI.

\section{Accommodation of RATL with SI}

Given that our incompatibility result that says No Future Bias and SI cannot
be accommodated with RATL under a very weak form of Independence, we are
motivated to drop No Future Bias to accommodate RATL with SI.

To begin with, the following is an example that satisfies SI and RATL in a
strict sense everywhere.

\begin{example}
\label{ex}Suppose $\gtrsim $ can be represented by 
\begin{equation*}
V(p)=\mathbb{E}_{p}-\left[ -\log \left( d^{t^{a}}v(x)\right) \right] ^{b},
\end{equation*}%
where $a>1$, $b\in \left( \frac{1}{a},1\right) $, $d\in (0,1)$ and $v$ is
monotone increasing and $Range(v)\in (0,1)$. Then $\gtrsim $ satisfies Axioms
0-3, and Independence. Moreover, it satisfies strict SI and strict RATL,
in the sense that%
\begin{equation*}
\frac{1}{2}\delta _{(x_{1},t_{1})}+\frac{1}{2}\delta _{(x_{2},t_{2})}\succ 
\frac{1}{2}\delta _{(x_{1},t_{2})}+\frac{1}{2}\delta _{(x_{2},t_{1})}
\end{equation*}%
for any $t_{1}<t_{2},x_{1}>x_{2}$ and 
\begin{equation*}
\delta _{\left( x,\bar{t}_{p_{x}}\right) }\succ p_{x}
\end{equation*}%
for any time lottery $p_{x}$.
\end{example}

The example lies in a large class of utility functions that satisfy SI and
RATL everywhere. We prove a representation theorem that characterizes this
class of functions in Expected Utility.

\textbf{Axiom 4} Independence: For all $p,q,r\in \Delta (X\times T)$ and $%
\lambda \in (0,1)$,%
\begin{equation*}
p\gtrsim q\Leftrightarrow \lambda p+(1-\lambda )r\gtrsim \lambda
q+(1-\lambda )r
\end{equation*}

\textbf{Axiom 5} Double Cancellation: For any $x_{1},x_{2},x_{3}\in X$ and $%
t_{1},t_{2},t_{3}\in T$, if $\delta _{(x_{1},t_{1})}\gtrsim \delta
_{(x_{2},t_{2})}$, and $\delta _{(x_{2},t_{3})}\gtrsim \delta
_{(x_{3},t_{1})}$, then $\delta _{(x_{1},t_{3})}\gtrsim \delta
_{(x_{3},t_{2})}$.

\begin{theorem}
\label{multiplicative}$\gtrsim $ satisfies Axioms 0-5, RATL, and SI if and
only if $\gtrsim $ can be represented by 
\begin{equation*}
V(p)=\mathbb{E}_{p}[\phi \left( D(t)v(x)\right) ]
\end{equation*}%
and the following are satisfied:

(1) $\phi $ is strictly increasing, and continuous on $Range(D\cdot v)$. In
addition, $\phi \circ \exp $ is convex.

(2) $D>0$ is strictly decreasing and continuous on $T$.

(3) $v>0$ is strictly increasing and continuous on $X$.

(4) $\phi \left( D(\cdot )v(x)\right) $ is concave for each $x\in X$.

What's more, $\ln D$ and $\ln v$ are unique up to positive linear
transformations. After fixing a choice of $D$ and $v$, $\phi $ is unique up to positive
linear transformations.
\end{theorem}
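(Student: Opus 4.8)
The plan is to prove the two directions separately, with almost all of the work in the ``only if'' direction; the ``if'' direction is a routine verification. For sufficiency, the form $V(p)=\mathbb{E}_p[\phi(D(t)v(x))]$ with a continuous integrand immediately yields Axioms 0, 3, and 4 (completeness, transitivity, continuity, and Independence are built into any expected-utility functional). Writing $U(x,t):=\phi(D(t)v(x))$, strict monotonicity of $\phi$ together with $D>0$ and $v$ increasing gives Axiom 1, and with $D$ decreasing gives Axiom 2. Since $U(x,t)=(\phi\circ\exp)(\ln v(x)+\ln D(t))$ is a strictly increasing function of the additive index $\ln v(x)+\ln D(t)$, the restriction of $\gtrsim$ to degenerate prizes has an additive representation, which automatically satisfies the Thomsen condition, i.e.\ Axiom 5. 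Convexity of $\phi\circ\exp$ reproduces SI and concavity of $t\mapsto U(x,t)$ reproduces RATL by Jensen's inequality, exactly reversing the computations below.

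For necessity, I would proceed in three stages. First, Axioms 0, 3, and 4 are the hypotheses of the von Neumann--Morgenstern theorem, so there is a continuous $U:X\times T\to\mathbb{R}$, unique up to a positive affine transformation, with $V(p)=\mathbb{E}_p[U(x,t)]$ representing $\gtrsim$. Second, I restrict attention to the induced order on degenerate prizes $\delta_{(x,t)}$, which is represented by $U$, and invoke additive conjoint measurement. Here Outcome Monotonicity and Impatience guarantee that the marginal order on each coordinate is the fixed natural (resp.\ reversed) order, so coordinate independence and essentiality of both factors hold; Double Cancellation is precisely the Thomsen condition; and Continuity together with the fact that $X$ and $T$ are (connected) intervals supplies restricted solvability and the Archimedean property. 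The additive conjoint measurement theorem of Debreu then yields continuous $g$ on $X$ and $h$ on $T$ and a continuous strictly increasing $\Phi$ with $U(x,t)=\Phi(g(x)+h(t))$. Setting $v:=\exp g$, $D:=\exp h$, and $\phi:=\Phi\circ\ln$ (equivalently $\Phi=\phi\circ\exp$) gives $U(x,t)=\phi(D(t)v(x))$ with $v,D>0$ and $\phi$ continuous and strictly increasing; Outcome Monotonicity forces $g=\ln v$ increasing and Impatience forces $h=\ln D$ decreasing, establishing (2), (3), and the first half of (1).

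Third, I translate SI and RATL into the curvature conditions. Putting $a_i=g(x_i)$ and $b_j=h(t_j)$, so that $x_1>x_2$ gives $a_1>a_2$ and $t_1<t_2$ gives $b_1>b_2$, SI is equivalent to $\Phi(a_1+b_1)+\Phi(a_2+b_2)\ge\Phi(a_1+b_2)+\Phi(a_2+b_1)$ for all such $a_1>a_2$, $b_1>b_2$; since $g$ and $h$ are continuous on intervals their ranges are intervals, so writing $c=a_2+b_2$, $\alpha=a_1-a_2$, $\beta=b_1-b_2$ this is exactly $\Phi(c+\alpha+\beta)+\Phi(c)\ge\Phi(c+\alpha)+\Phi(c+\beta)$ over the whole range, which, with continuity, is equivalent to convexity of $\Phi=\phi\circ\exp$, giving the second half of (1). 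For RATL, under the expected-utility form $\delta_{(x,\bar t_{p_x})}\gtrsim p_x$ reads $U(x,\mathbb{E}_{p_x}[t])\ge\mathbb{E}_{p_x}[U(x,t)]$ for every finitely supported time lottery, which by Jensen's inequality holds for all such lotteries if and only if $t\mapsto U(x,t)=\phi(D(t)v(x))$ is concave for each $x$; this is (4).

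Finally, the uniqueness claims follow by combining the two uniqueness statements already used: additive conjoint measurement pins down $g=\ln v$ and $h=\ln D$ up to a common positive scaling and separate additive constants (positive affine transformations), and once $D$ and $v$ are fixed the von Neumann--Morgenstern uniqueness of $U$ up to positive affine transformation determines $\phi$ on $Range(D\cdot v)$ up to a positive affine transformation. I expect the main obstacle to be the second stage: verifying that the topological content of Axiom 3 and the interval structure of $X\times T$ really do deliver the restricted-solvability and Archimedean hypotheses of the additive conjoint measurement theorem, and checking that Double Cancellation as stated coincides with the Thomsen condition in the presence of the monotonicity axioms. The curvature translations in the third stage, by contrast, are short once the additive index is in hand.
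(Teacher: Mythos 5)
Your proposal is correct in substance and relies on the same two pillars as the paper --- Debreu's additive conjoint measurement theorem and the von Neumann--Morgenstern theorem, plus Jensen's inequality for RATL --- but it assembles them in the opposite order, and this is a genuine structural difference. The paper (via its Lemma \ref{additive}, which does everything in additive form and only exponentiates at the end) first applies conjoint measurement to the degenerate lotteries to get $D(t)+v(x)$, then pushes the preference forward to $\Delta(Y)$ with $Y=Range(D+v)$ and applies vNM \emph{on the quotient}; this requires an extra step, namely that Independence makes the induced relation $\gtrsim_Y$ well defined. You instead apply vNM globally first to get $U(x,t)$ and then decompose $U$ ordinally as $\Phi(g(x)+h(t))$ on the sure outcomes. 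Your route avoids the quotient construction, but it silently requires that the ordinal link $\Phi$ be continuous, which amounts to showing the vNM index has interval range on degenerate lotteries; this does follow from Axiom 3 and connectedness of $X\times T$ (the sets $\{(x,t):\delta_{(x,t)}\gtrsim q\}$ and $\{(x,t):\delta_{(x,t)}\lesssim q\}$ are closed, nonempty, and cover a connected space, so every intermediate utility value is attained), but you should say so --- it is the exact counterpart of the paper's assertion that $\phi$ on $Y$ is continuous. The one place where your write-up is loose is the SI-to-convexity step: the inequality $\Phi(c+\alpha+\beta)+\Phi(c)\geq\Phi(c+\alpha)+\Phi(c+\beta)$ does \emph{not} hold ``over the whole range'' of $\alpha,\beta$; SI only delivers it when $\alpha$ is a difference of values of $g$ and $\beta$ a difference of values of $h$, so both are capped by the (possibly small) lengths of $Range(\ln v)$ and $Range(\ln D)$. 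This is precisely why the paper spends a page on a rational-subdivision argument chaining small increments. Your conclusion survives because convexity is a local-to-global property: increasing increments for all sufficiently small $\alpha=\beta$ give local midpoint convexity, which with continuity yields convexity on the whole interval; but as stated, taking $\alpha=\beta=(d-c)/2$ for distant $c<d$ is not licensed. With that repair and the continuity remark above, your proof is complete, and arguably cleaner than the paper's in that it dispenses with the well-definedness argument for the quotient preference.
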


\iffalse
The complete proof is left to Appendix. Here we discuss the motivation for
this representation. Suppose a preference relation can be represented by%
\begin{equation*}
V(p)=\mathbb{E}_{p}[\phi \left( D(t)v(x)\right) ]
\end{equation*}%
where $\phi ,D,v$ are sufficiently smooth. Assume in addition that $\phi $
is strictly increasing. We let $\phi ^{\ast }=\phi \circ \exp $, $D^{\ast
}=\ln D$ and $v^{\ast }=\ln v$. Then 
\begin{equation*}
V(p)=\mathbb{E}_{p}\phi ^{\ast }\left( D^{\ast }(t)+v^{\ast }(x)\right) .
\end{equation*}%
By taking derivative in $x$ and $t$, we get%
\begin{eqnarray*}
\text{ Outcome Monotonicity } &\Longleftrightarrow &\phi ^{\ast \prime
}(D^{\ast }(t)+v^{\ast }(x))v^{\ast \prime }(x)>0, \\
\text{Impatience } &\Longleftrightarrow &\phi ^{\ast \prime }(D^{\ast
}(t)+v^{\ast }(x))D^{\ast \prime }(t)<0..
\end{eqnarray*}%
Proposition \ref{derivativeSI} in Appendix implies 
\begin{equation*}
\text{Stochastic Impatience }\Longleftrightarrow \phi ^{\ast \prime \prime
}(D(t)+v^{\ast }(x))D^{\ast \prime }(t)v^{\ast \prime }(x)<0.
\end{equation*}%
Thus, Impatience, Outcome Monotonicity and Stochastic Impatience are
equivalent to 
\begin{equation*}
D^{\ast \prime }<0,v^{\ast \prime }>0,\phi ^{\ast \prime \prime }<0.
\end{equation*}
\fi

\begin{remark}
Given the representation%
\begin{equation*}
V(p)=\mathbb{E}_{p}[\phi \left( D(t)v(x)\right)] ,
\end{equation*}%
RATL arises from two parts: Future Bias and curvature of $\phi $. Due to the
separability of time with monetary prizes, Future Bias is equivalent to
increasing discount rate. Hence, for any $\tau >0$, the discount from $%
\delta _{(x,t)}$ to $\delta _{(x,t+\tau )}$ is less than the discount from $%
\delta _{(x,t+\tau )}$ to $\delta _{(x,t+2\tau )}$. However, Future Bias
alone is usually not enough to explain RATL: 
\begin{equation*}
V(p)=\mathbb{E}[e^{-t-\frac{t^{3}}{3}}v(x)]
\end{equation*}%
satisfies Future Bias, but not RATL. Only if the discount function $D$ is
concave can $V(p)=\mathbb{E}[D(t)v(x)]$ satisfies RATL, but this is usually
not desirable.

Risk attitude is capture by $\phi $. The more concave $\phi $ is, the more
risk averse decision makers are and the more likely they are RATL. However,
SI requires $\phi $ to be more convex than logarithmic functions. Example %
\ref{ex} shows that $\phi (x):=-(-\log x)^{b}$ is at the right level of
convexity that is able to accommodate both SI and RATL.
\end{remark}

Despite discarding No Future Bias and adopting everywhere RATL in the
representation theorem, it is important to note that we are not arguing No
Future Bias is completely false, or RATL should hold everywhere. Admittedly,
there is evidence in support of No Future Bias and lab experiments do document some RSTL choices when there is a small probability of a very early arrival. However, our
representation shows that RATL and SI are not fundamentally conflicting in
Expected Utility. Hence, decision makers' occasional RATL choices may be
due to Future Bias but not the failure of Expected Utility or SI. This opens the door for future study.

\iffalse
\section{Conclusion}
We prove that under WCI, a rather weak of Independence, weak RATL cannot be reconciled with SI and No Future Bias. In light of this incompatibility, we argue that by dropping No Future Bias, SI and RATL can coexist. More precisely, we prove a representation theorem for a multiplicative model within Expected Utility and characterize when the model satisfies SI and RATL. 
\fi

\section*{Appendix}

\addcontentsline{toc}{section}{Appendices} \renewcommand{\thesubsection}{%
\Alph{subsection}}

\subsection{Key lemmas about SI and RATL}

Assuming Independence and Continuity, preferences on $\Delta \left( X\times T\right) $ can
be represented by 
\begin{equation*}
V(p)=\mathbb{E}_{p}u(x,t)
\end{equation*}%
for some $u$ defined on $X\times T$. Within the Expected Utility framework,
RATL and RSTL relations can be passed to the curvature of $u$.

\begin{lemma}
\label{concaveRATL}In the Expected Utility model in which a preference $%
\gtrsim $ is represented by 
\begin{equation*}
V(p)=\mathbb{E}_{p}u(x,t),
\end{equation*}%
$\gtrsim $ is RATL if and only if $u\left( x,\cdot \right) $ is concave for
any $x\in X$. Similarly, $\gtrsim $ is RSTL if and only if $u\left( x,\cdot
\right) $ is convex for any $x$.
\end{lemma}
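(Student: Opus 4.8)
The plan is to observe that, under the Expected Utility representation $V(p)=\mathbb{E}_p u(x,t)$, the defining inequality of RATL is exactly Jensen's inequality for the map $t\mapsto u(x,t)$. The equivalence with concavity is then the standard characterization of concave functions via Jensen's inequality together with its converse, so the argument is short.

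First I would fix $x\in X$ and unwind both sides of the RATL condition. For a time lottery $p_x$ supported on $\{(x,t_1),\dots,(x,t_n)\}$ with weights $\lambda_1,\dots,\lambda_n$, we have $V(p_x)=\sum_i \lambda_i u(x,t_i)$ and $V(\delta_{(x,\bar t_{p_x})})=u\bigl(x,\sum_i \lambda_i t_i\bigr)$, where $\bar t_{p_x}=\sum_i \lambda_i t_i$. Hence $\delta_{(x,\bar t_{p_x})}\gtrsim p_x$ holds if and only if $u\bigl(x,\sum_i \lambda_i t_i\bigr)\geq \sum_i \lambda_i u(x,t_i)$. This reduction turns the behavioral statement into a purely analytic one about the section $u(x,\cdot)$.

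For the forward implication (concavity implies RATL), I would invoke Jensen's inequality directly: if $u(x,\cdot)$ is concave on the interval $T$, then the displayed inequality holds for every finitely supported distribution on $T$, which suffices because every lottery in $\Delta(X\times T)$ is simple. Thus RATL holds for each time lottery with prize $x$, and then for every $x\in X$. For the converse (RATL implies concavity), I would specialize to two-point lotteries: given $t_1,t_2\in T$ and $\lambda\in[0,1]$, take $p_x=\lambda\delta_{(x,t_1)}+(1-\lambda)\delta_{(x,t_2)}$, so $\bar t_{p_x}=\lambda t_1+(1-\lambda)t_2$, and RATL yields $u\bigl(x,\lambda t_1+(1-\lambda)t_2\bigr)\geq \lambda u(x,t_1)+(1-\lambda)u(x,t_2)$, which is exactly the definition of concavity of $u(x,\cdot)$ on $T$. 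The RSTL statement follows by reversing every inequality, with the converse direction of Jensen's inequality giving convexity.

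I do not expect a genuine obstacle here. The only two points requiring a line of care are that the lotteries are simple, so the finite form of Jensen's inequality applies and no integrability or measurability concerns arise, and that, because $T$ is an interval, the two-point inequality holding for all $\lambda\in[0,1]$ is already the full definition of concavity; in particular continuity of $u$, which the Expected Utility setup provides, is not even needed to pass from midpoint to arbitrary convex combinations.
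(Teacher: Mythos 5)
Your proposal is correct and matches the paper's approach: the paper disposes of this lemma in one line, ``This claim follows directly from Jensen's inequality,'' and your argument is precisely that Jensen argument written out in full (finite Jensen for one direction, two-point lotteries for the converse). No gaps; your observation that arbitrary $\lambda$ in the two-point case gives full concavity without any continuity assumption is also right.
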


This claim follows directly from Jensen's inequality.

There is a neat characterization of SI within Expected Utility: time and
monetary prizes are complimentary. This has been observed by \cite{DDGO20}, but here we give a rigorous proof. 

\begin{proposition}
\label{derivativeSI}Suppose $\gtrsim $ can be represented by 
\begin{equation*}
V(p)=\mathbb{E}_{p}u(x,t),
\end{equation*}%
where $u$ is twice differentiable. Then the relation $\gtrsim $ satisfies SI
if and only if $\frac{\partial u}{\partial x\partial t}\leq 0$ on $%
\inte(X\times T)$.
\end{proposition}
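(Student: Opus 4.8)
The plan is to translate SI, under the representation $V(p)=\mathbb{E}_p u(x,t)$, into an algebraic submodularity condition on $u$, and then identify that condition with the sign of the mixed partial. First I would note that, after multiplying by $2$, the SI inequality
\[
\tfrac{1}{2}\delta_{(x_1,t_1)}+\tfrac{1}{2}\delta_{(x_2,t_2)}\gtrsim \tfrac{1}{2}\delta_{(x_1,t_2)}+\tfrac{1}{2}\delta_{(x_2,t_1)}
\]
is equivalent to $u(x_1,t_1)+u(x_2,t_2)\geq u(x_1,t_2)+u(x_2,t_1)$ for all $x_1>x_2$ and $t_1<t_2$. Rearranging this as $u(x_1,t_1)-u(x_2,t_1)\geq u(x_1,t_2)-u(x_2,t_2)$ exhibits SI as exactly the statement that $u$ is submodular, i.e.\ that the money-difference is non-increasing in $t$. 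The proposition then reduces to the classical fact that a sufficiently smooth function is submodular if and only if its mixed partial is non-positive, which I would prove in both directions.

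For the direction $\frac{\partial^2 u}{\partial x\,\partial t}\leq 0 \Rightarrow \text{SI}$, I fix $x_1>x_2$ and $t_1<t_2$ and set $\Phi(x)=u(x,t_1)-u(x,t_2)$. By the fundamental theorem of calculus in $t$ and the sign hypothesis,
\[
\Phi'(x)=\frac{\partial u}{\partial x}(x,t_1)-\frac{\partial u}{\partial x}(x,t_2)=-\int_{t_1}^{t_2}\frac{\partial^2 u}{\partial x\,\partial t}(x,t)\,dt\geq 0,
\]
so $\Phi$ is non-decreasing and $\Phi(x_1)\geq\Phi(x_2)$, which is precisely the rearranged SI inequality. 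Since the hypothesis lives on $\inte(X\times T)$ while SI ranges over possibly boundary points, the inequality extends to the boundary using the continuity of $u$.

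For $\text{SI}\Rightarrow \frac{\partial^2 u}{\partial x\,\partial t}\leq 0$ on the interior, I fix an interior point $(x,t)$ and choose $h,k>0$ small enough that the rectangle with corners $(x,t)$ and $(x+h,t+k)$ stays inside $\inte(X\times T)$. Applying the rearranged SI inequality to these corners and dividing by $hk>0$ gives
\[
\frac{[u(x+h,t)-u(x,t)]-[u(x+h,t+k)-u(x,t+k)]}{hk}\geq 0.
\]
Letting $h\to 0$ yields $\big(\tfrac{\partial u}{\partial x}(x,t)-\tfrac{\partial u}{\partial x}(x,t+k)\big)/k\geq 0$, and then letting $k\to 0$ gives $-\frac{\partial^2 u}{\partial t\,\partial x}(x,t)\geq 0$, hence the claim after invoking symmetry of the second-order partials.

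The main care point is regularity. The first direction needs the fundamental theorem of calculus applied to $\frac{\partial u}{\partial x}$, and the second needs the iterated limits in $h$ and $k$ to be legitimate together with the interchange $\frac{\partial^2 u}{\partial x\,\partial t}=\frac{\partial^2 u}{\partial t\,\partial x}$. Twice-differentiability guarantees symmetry of the Hessian and makes the limiting argument rigorous; the integral step is clean under the continuity of $\frac{\partial^2 u}{\partial x\,\partial t}$ implicit in the statement. I expect this regularity bookkeeping, rather than the algebra, to be the only real obstacle.
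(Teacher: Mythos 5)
Your proof is correct, and it diverges from the paper's in one of the two directions. Both arguments begin with the same reduction: under the Expected Utility representation, SI is exactly the four-point submodularity inequality $u(x_1,t_1)+u(x_2,t_2)\geq u(x_1,t_2)+u(x_2,t_1)$ for $x_1>x_2$, $t_1<t_2$. For the sufficiency direction ($u_{xt}\leq 0\Rightarrow$ SI), your argument is essentially the paper's: the paper writes the double difference as $\int_{t_1}^{t_2}\left( u_t(x_2,t)-u_t(x_1,t)\right) dt$ and signs the integrand using $u_{xt}\leq 0$, while you differentiate $\Phi(x)=u(x,t_1)-u(x,t_2)$ in $x$ and integrate $u_{xt}$ over $t$ --- the same computation with the order of integration swapped; your explicit extension to boundary points by continuity is a detail the paper passes over silently. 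For the necessity direction (SI $\Rightarrow u_{xt}\leq 0$), the routes genuinely differ: the paper applies SI to the symmetric points $(x_0\pm\Delta x,\,t_0\pm\Delta t)$, expands all four terms in a second-order Taylor series, and lets $\Delta x=\Delta t\to 0$, so that only the cross terms survive; you instead divide the four-point inequality by $hk$ and take iterated limits, first $h\to 0$, then $k\to 0$. Your version is more elementary --- it needs only the existence of $\partial_t\partial_x u$ at the point and avoids the Peano-remainder bookkeeping (the paper's $o((\Delta x)^3)$ is in any case a typo for $o((\Delta x)^2)$) --- at the cost of invoking symmetry of the mixed partials explicitly, which the Taylor expansion builds in; both versions require the same Fr\'echet-type twice differentiability for that symmetry to hold. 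One small repair on your side: the identity $\Phi'(x)=-\int_{t_1}^{t_2}u_{xt}(x,t)\,dt$ requires $u_{xt}(x,\cdot)$ to be integrable, which bare twice differentiability does not guarantee; it is cleaner to note that $t\mapsto u_x(x,t)$ has everywhere non-positive derivative and is therefore non-increasing by the mean value theorem, which yields $\Phi'\geq 0$ directly. (The paper's sufficiency step has the same character and admits the same fix.)
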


\begin{proof}
$\qquad \Longrightarrow $ Stochastic Impatience implies that for any $%
x_{0}\in int\left( X\right) $, $t_{0}\in int\left( T\right) $, and small
increments $\Delta x>0$ and $\Delta t>0$, we have 
\begin{equation}
u(x_{0}+\Delta x,t_{0}-\Delta t)+u(x_{0}-\Delta x,t_{0}+\Delta t)\geq
u(x_{0}+\Delta x,t_{0}+\Delta t)+u(x_{0}-\Delta x,t_{0}-\Delta t).
\label{eqn1}
\end{equation}%
Consider the second-order Taylor series of $f$ at $\left( x_{0},t_{0}\right) 
$ and let $\Delta x=\Delta t\rightarrow 0$. By slight abuse of notation, we
write $u_{x}\left( x_{0},t_{0}\right) $ as $u_{x}$ and employ similar
abbreviations for other derivatives at $(x_{0},t_{0})$. Rewrite Equation (\ref%
{eqn1}) as%
\begin{eqnarray*}
&&u(x_{0},t_{0})+\Delta x\cdot u_{x}-\Delta t\cdot u_{t}+\frac{1}{2}\left(
u_{xx}\left( \Delta x\right) ^{2}+u_{tt}\left( \Delta t\right)
^{2}-2u_{xt}\left( \Delta x\right) \left( \Delta t\right) \right) \\
&&+u(x_{0},t_{0})-\Delta x\cdot u_{x}+\Delta t\cdot u_{t}+\frac{1}{2}\left(
u_{xx}\left( \Delta x\right) ^{2}+u_{tt}\left( \Delta t\right)
^{2}-2u_{xt}\left( \Delta x\right) \left( \Delta t\right) \right) +o((\Delta
x)^{3}) \\
&\geq &u(x_{0},t_{0})+\Delta x\cdot u_{x}+\Delta t\cdot u_{t}+\frac{1}{2}%
\left( u_{xx}\left( \Delta x\right) ^{2}+u_{tt}\left( \Delta t\right)
^{2}+2u_{xt}\left( \Delta x\right) \left( \Delta t\right) \right) \\
&&+u(x_{0},t_{0})-\Delta x\cdot u_{x}-\Delta t\cdot u_{t}+\frac{1}{2}\left(
u_{xx}\left( \Delta x\right) ^{2}+u_{tt}\left( \Delta t\right)
^{2}+2u_{xt}\left( \Delta x\right) \left( \Delta t\right) \right) +o((\Delta
x)^{3})
\end{eqnarray*}%
Equivalently, 
\begin{equation*}
u_{xt}\leq 0.
\end{equation*}%
$\qquad \Longleftarrow $ For any $t_{1}<t_{2}$ in $T$ and $x_{1}>x_{2}$ in $%
X $, we have 
\begin{eqnarray*}
&&u(x_{1},t_{1})+u(x_{2},t_{2})-u(x_{1},t_{2})-u(x_{2},t_{1}) \\
&=&\left( u(x_{1},t_{1})-u(x_{1},t_{2})\right) +\left(
u(x_{2},t_{2})-u(x_{2},t_{1})\right) \\
&=&-\int_{t_{1}}^{t_{2}}u_{t}\left( x_{1},t\right)
dt+\int_{t_{1}}^{t_{2}}u_{t}\left( x_{2},t\right) dt \\
&=&\int_{t_{1}}^{t_{2}}\left( u_{t}\left( x_{2},t\right) -u_{t}\left(
x_{1},t\right) \right) dt.
\end{eqnarray*}%
Since $u_{xt}\leq 0$, the integrand of the last line $u_{t}\left(
x_{2},t\right) -u_{t}\left( x_{1},t\right) \geq 0$ so the integral is
non-negative.
\end{proof}

\subsection{Analysis of Example \protect\ref{ex}}

We shall prove that $\gtrsim $ in Example \ref{ex} satisfies strict SI and
strict RATL.

To check that $\gtrsim $ satisfies strict SI, we adopt Proposition \ref%
{derivativeSI} (use strict SI and strict inequality in the second order
derivative in place of SI and weak inequality). We have%
\begin{eqnarray*}
&&\frac{\partial }{\partial t\partial x}\left( -\left[ -\log \left(
d^{t^{a}}v(x)\right) \right] ^{b}\right) \\
&=&\frac{\partial }{\partial t\partial x}\left( -\left[ -t^{a}\log d-\log
v(x)\right] ^{b}\right) \\
&=&\frac{\partial }{\partial x}\left( -b\left[ -t^{a}\log d-\log v(x)\right]
^{b-1}\left( -at^{a-1}\log d\right) \right) \\
&=&-b(b-1)\left[ -t^{a}\log d-\log v(x)\right] ^{b-2}\left( -at^{a-1}\log
d\right) (-\frac{1}{v(x)}v^{\prime }(x)) \\
&<&0
\end{eqnarray*}%
because $b(b-1)<0,-t^{a}\log d-\log v(x)>0,-at^{a-1}\log d>0$ and $-\frac{1}{%
v(x)}v^{\prime }(x)<0$.

As for RATL, we have%
\begin{eqnarray*}
&&\frac{\partial }{\partial t\partial t}\left( -\left[ -\log \left(
d^{t^{a}}v(x)\right) \right] ^{b}\right) \\
&=&\frac{\partial }{\partial t}\left( -b\left[ -t^{a}\log d-\log v(x)\right]
^{b-1}\left( -at^{a-1}\log d\right) \right) \\
&=&-b(b-1)\left[ -t^{a}\log d-\log v(x)\right] ^{b-2}\left( -at^{a-1}\log
d\right) ^{2} \\
&&-b\left[ -t^{a}\log d-\log v(x)\right] ^{b-1}\left( -a(a-1)t^{a-2}\log
d\right) \\
&=&-b\left[ -t^{a}\log d-\log v(x)\right] ^{b-2}\left( -at^{a-1}\log
d\right) \left[ -(b-1)at^{a-1}\log d+\left[ -t^{a}\log d-\log v(x)\right]
(a-1)t^{-1}\right] .
\end{eqnarray*}%
Since 
\begin{equation*}
-b\left[ -t^{a}\log d-\log v(x)\right] ^{b-2}\left( -at^{a-1}\log d\right) <0
\end{equation*}%
and 
\begin{eqnarray*}
&&-(b-1)at^{a-1}\log d+\left[ -t^{a}\log d-\log v(x)\right] (a-1)t^{-1} \\
&>&-(b-1)at^{a-1}\log d+(-t^{a}\log d)(a-1)t^{-1} \\
&=&t^{a-1}((-(b-1)a-(a-1))\log d \\
&=&t^{a-1}(-ab+1)\log d \\
&>&0,
\end{eqnarray*}%
we have%
\begin{equation*}
\frac{\partial }{\partial t\partial t}\left( -\left[ -\log \left(
d^{t^{a}}v(x)\right) \right] ^{b}\right) <0.
\end{equation*}%
By Lemma \ref{concaveRATL} (with strict RATL and strictly concave in place
of RATL and concave), $\gtrsim $ satisfies strict RATL.

\subsection{Proof of Theorem \protect\ref{incompatibility}}

\begin{lemma}
Suppose $\gtrsim $ satisfies Continuity, Impatience and Outcome
Monotonicity. Fix $\left( x,t\right) \in \inte(X\times T)$. Then there exists $s>0$
such that for any $\tau \in (0,s)$, there exists $y<x$ in $X$ satisfying $%
\delta _{(y,t-\tau )}\sim \delta _{(x,t)}$.
\end{lemma}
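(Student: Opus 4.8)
The plan is to fix the target lottery $\delta_{(x,t)}$ and, for each small $\tau>0$, produce the desired $y$ by an intermediate-value argument in the money coordinate at the earlier date $t-\tau$. The guiding intuition is that moving the delivery date earlier (from $t$ to $t-\tau$) strictly improves the prospect by Impatience, so to restore indifference with $\delta_{(x,t)}$ one must lower the monetary prize below $x$; interiority guarantees there is room to do so, and Continuity guarantees an exact crossing.

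First I would set up two anchors. For the upper anchor, Impatience gives $\delta_{(x,t-\tau)}\succ\delta_{(x,t)}$ for every $\tau>0$ with $t-\tau\in T$. For the lower anchor I would use interiority of $(x,t)$ to fix some $y_1\in X$ with $y_1<x$; Outcome Monotonicity then yields $\delta_{(y_1,t)}\prec\delta_{(x,t)}$. The key is to propagate this strict preference to nearby dates uniformly: by Completeness the strict lower set $\{q:\delta_{(x,t)}\succ q\}$ is the complement of the weakly closed set $\{q:q\gtrsim\delta_{(x,t)}\}$, hence open, and the map $\tau\mapsto\delta_{(y_1,t-\tau)}$ is weakly continuous with limit $\delta_{(y_1,t)}$ as $\tau\to 0^{+}$. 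Therefore there is $s>0$ — chosen small enough that also $t-\tau\in T$, which is possible since $t$ is interior to $T$ — such that $\delta_{(y_1,t-\tau)}\prec\delta_{(x,t)}$ for all $\tau\in(0,s)$. Note that $y_1$ and $s$ are fixed independently of $\tau$.

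With both anchors in hand, fix any $\tau\in(0,s)$ and run the intermediate-value step over $y\in[y_1,x]\subseteq X$. Let $A=\{y:\delta_{(y,t-\tau)}\gtrsim\delta_{(x,t)}\}$ and $B=\{y:\delta_{(x,t)}\gtrsim\delta_{(y,t-\tau)}\}$. Since $y\mapsto\delta_{(y,t-\tau)}$ is weakly continuous and the two weak contour sets of $\delta_{(x,t)}$ are closed (Continuity), $A$ and $B$ are closed; Completeness gives $A\cup B=[y_1,x]$; and the anchors give $x\in A$ and $y_1\in B$, so both are nonempty. As $[y_1,x]$ is connected, $A\cap B\neq\emptyset$, and any $y^{\ast}\in A\cap B$ satisfies $\delta_{(y^{\ast},t-\tau)}\sim\delta_{(x,t)}$. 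Finally $y^{\ast}\neq x$, because $\delta_{(x,t-\tau)}\succ\delta_{(x,t)}$ forces $x\notin B$ while $y^{\ast}\in B$; hence $y^{\ast}<x$, as required.

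The only genuinely delicate step is the uniform lower anchor in the second paragraph: one must produce a single $y_1<x$ and a single threshold $s$ that serve for all $\tau\in(0,s)$ simultaneously, which is precisely where interiority (to secure both $y_1<x$ in $X$ and $t-\tau\in T$) and the openness of strict-preference sets are essential. The connectedness/intermediate-value step that follows is then routine.
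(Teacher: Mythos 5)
Your proof is correct, but it reaches the uniform threshold $s$ by a genuinely different mechanism than the paper. The paper anchors at the extreme point $w=\min X$ and runs an intermediate-value argument in the \emph{time} coordinate first: either there is a date $d<t$ with $\delta_{(w,d)}\sim\delta_{(x,t)}$, in which case it sets $s=t-d$ and lets Impatience propagate the strict preference $\delta_{(x,t)}\succ\delta_{(w,t-\tau)}$ to every $\tau\in(0,s)$, or no such date exists, in which case it sets $s=t-\min T$; this forces a two-case analysis. You instead anchor at an arbitrary interior $y_1<x$ and obtain the uniformity in $\tau$ topologically, from the openness of the strict lower contour set of $\delta_{(x,t)}$ (the complement, via Completeness, of a weakly closed set) together with weak continuity of $\tau\mapsto\delta_{(y_1,t-\tau)}$. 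After that the two proofs coincide: the same connectedness/intermediate-value argument in the money coordinate at date $t-\tau$, with $x$ in the upper contour set by Impatience and the anchor in the lower one, and the same observation that Impatience excludes $y=x$. Your route buys the absence of the case split and is purely local --- it never invokes the existence of $\min X$ or $\min T$, so it would survive verbatim if $X$ or $T$ were open or unbounded below --- whereas the paper's route produces a concrete threshold ($s=t-d$, the gap to the worst prize's indifference date) whose uniformity rests on the order structure supplied by Impatience rather than on the topology of $\Delta(X\times T)$.
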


\begin{proof}
Let $w=\min X$ be the worst possible monetary prize. Let $S:=\{c:\delta
_{(w,c)}\gtrsim \delta _{(x,t)}\}$ and $L:=\{c:\delta _{(w,c)}\lesssim
\delta _{(x,t)}\}$. Then Continuity and Impatience imply that $S$ and $L$
are closed intervals and $S\cup L=X$. By monotonicity, $\delta _{(w,t)}\prec
\delta _{(x,t)}$ so $t\in L$.

\textbf{Case 1}: $S\neq \varnothing $. Let $d:=\max S=\min L$. Then $\delta
_{(w,d)}\sim \delta _{(x,t)}$. We must have $d<t$; otherwise, Impatience and
Outcome Monotonicity would imply%
\begin{equation*}
\delta _{(x,t)}\gtrsim \delta _{(x,d)}\succ \delta _{(w,d)},
\end{equation*}%
a contradiction.

Set $s=t-d$. For any $\tau \in (0,s)$, we let $S_{\tau }:=\left\{ y:\delta
_{(y,t-\tau )}\gtrsim \delta _{(x,t)}\right\} $ and $L_{\tau }:=\{y:\delta
_{(y,t-\tau )}\lesssim \delta _{(x,t)}\}$. Then $S_{\tau }$ and $L_{\tau }$
are closed intervals with $S_{\tau }\cup L_{\tau }=X$. By Impatience, $x\in
S_{\tau }$. What's more, $\delta _{(w,t-\tau )}\lesssim \delta _{(w,d)}\sim
\delta _{(x,t)}$ so $w\in L_{\tau }$. We set $y_{\tau }:=\min S_{\tau }=\max
L_{\tau }$. Then it follows that $\delta _{(y_{\tau },t-\tau )}\sim \delta
_{(x,t)}$. Similar to before, Impatience and Outcome Monotonicity imply $%
y_{\tau }<x$.

\textbf{Case 2}: $S=\varnothing $. Set $s=t-\min T$. For any $\tau \in (0,s)$, we
let $S_{\tau }:=\left\{ y:\delta _{(y,t-\tau )}\gtrsim \delta
_{(x,t)}\right\} $ and $L_{\tau }:=\{y:\delta _{(y,t-\tau )}\lesssim \delta
_{(x,t)}\}$. Then $x\in S_{\tau }$. In addition, $S=\varnothing $ implies $%
\delta _{(x,t)}\succ \delta _{(w,t-\tau )}$. Thus, $w\in L_{\tau }$. We set $%
y_{\tau }=\min S_{\tau }=\max L_{\tau }$. Then $\delta _{(y_{\tau },t-\tau
)}\sim \delta _{(x,t)}$ and $y_{\tau }<x$.
\end{proof}

Fix $(x,t)\in $int$(X\times T)$, and let $s$ be as in the previous lemma.
For any $\tau \in (0,s)$, choose $y<x$ in $X$ satisfying $\delta _{(y,t-\tau
)}\sim \delta _{(x,t)}$. No Future Bias implies%
\begin{equation*}
\delta _{(x,t+\tau )}\gtrsim \delta _{(y,t)}.
\end{equation*}%
By WCI, we are able to add the same term on both sides:%
\begin{equation}
\frac{1}{2}\delta _{(x,t+\tau )}+\frac{1}{2}\delta _{(x,t-\tau )}\gtrsim 
\frac{1}{2}\delta _{(y,t)}+\frac{1}{2}\delta _{(x,t-\tau )}\text{.}
\label{incompatibility 1}
\end{equation}

By SI, 
\begin{equation}
\frac{1}{2}\delta _{(y,t)}+\frac{1}{2}\delta _{(x,t-\tau )}\gtrsim \frac{1}{2%
}\delta _{(x,t)}+\frac{1}{2}\delta _{(y,t-\tau )}.  \label{incompatibility 2}
\end{equation}%
The two terms on the right hand side can be combined into 
\begin{equation}
\frac{1}{2}\delta _{(x,t)}+\frac{1}{2}\delta _{(y,t-\tau )}\sim \delta
_{(x,t)}  \label{incompatibility 3}
\end{equation}%
using WCI.

Equation (\ref{incompatibility 1}), (\ref{incompatibility 2}), and (\ref%
{incompatibility 3}) then imply%
\begin{equation*}
\frac{1}{2}\delta _{(x,t+\tau )}+\frac{1}{2}\delta _{(x,t-\tau )}\gtrsim
\delta _{(x,t)}
\end{equation*}%
as desired.

\subsection{Proof of Theorem \protect\ref{multiplicative}}

We begin by writing multiplicative forms into additive forms.

\begin{lemma}
\label{additive}$\gtrsim $ satisfies Axioms 0-5, RATL, and SI if and only if $%
\gtrsim $ can be represented by 
\begin{equation*}
V(p)=\mathbb{E}_{p}[\phi \left( D(t)+v(x)\right) ]
\end{equation*}%
and the following are satisfied:

(1') $\phi $ is strictly increasing, convex and continuous on $Range(D+v)$.

(2') $D$ is strictly decreasing and continuous on $T$.

(3') $v$ is strictly increasing and continuous on $X$.

(4') $\phi \left( D(\cdot )+v(x)\right) $ is concave for each $x\in X$.

What's more, such $D$ and $v$ are unique up to positive linear
transformations. After fixing a choice of $D$ and $v$, $\phi $ is unique up
to positive linear transformations.
\end{lemma}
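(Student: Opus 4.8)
The plan is to reduce the statement to a standard two-stage argument: first obtain an expected-utility (cardinal) representation, then impose additive separability on the induced ordering of dated prizes, and finally translate RATL and SI into curvature conditions on $\phi$.

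First I would invoke the von Neumann--Morgenstern theorem: Axioms 0, 3, and 4 (weak order, continuity, independence) give a continuous $u:X\times T\to\mathbb{R}$ with $V(p)=\mathbb{E}_p u(x,t)$, unique up to a positive affine transformation. Next, I restrict $\gtrsim$ to the degenerate lotteries $\delta_{(x,t)}$; this is a continuous weak order on the connected, separable product $X\times T$. Outcome Monotonicity and Impatience (Axioms 1, 2) say the conditional order on each coordinate is independent of the level of the other, and that both coordinates are essential, while Double Cancellation (Axiom 5) is exactly the Thomsen condition needed in the two-factor case. Hence the additive conjoint measurement theorem (Debreu) yields continuous $D:T\to\mathbb{R}$ and $v:X\to\mathbb{R}$ with
\[
\delta_{(x,t)}\gtrsim\delta_{(x',t')}\iff D(t)+v(x)\ge D(t')+v(x'),
\]
with $(D,v)$ unique up to a common positive scaling and separate additive constants. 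Since $u$ and $D+v$ represent the same order on dated prizes, there is a strictly increasing $\phi$ on $Range(D+v)$ with $u(x,t)=\phi(D(t)+v(x))$; continuity of $\phi$ follows from continuity of $u$ and of $D+v$ and the fact that $D+v$ is a continuous surjection onto an interval. Outcome Monotonicity then forces $v$ strictly increasing (3') and Impatience forces $D$ strictly decreasing (2').

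The two curvature conditions come next. For (4'), Lemma \ref{concaveRATL} says RATL is equivalent to $u(x,\cdot)$ being concave for each $x$, and $u(x,\cdot)=\phi(D(\cdot)+v(x))$, which is exactly the asserted concavity. For (1'), I would unwind SI through $u=\phi\circ(D+v)$: writing $q=D(t_1)+v(x_1)$, $p=D(t_2)+v(x_2)$, $\alpha=D(t_1)+v(x_2)$, $\beta=D(t_2)+v(x_1)$ for $t_1<t_2$ and $x_1>x_2$, one has $p\le\alpha,\beta\le q$ and $\alpha+\beta=p+q$, and the SI inequality becomes $\phi(p)+\phi(q)\ge\phi(\alpha)+\phi(\beta)$. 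This ``spreading increases the sum'' inequality, over all admissible quadruples, is precisely convexity of $\phi$. The uniqueness claims are then bookkeeping: $(D,v)$ is pinned down, up to conjoint transformations, by the ordinal data alone, and once $D,v$ are fixed, $\phi$ is determined on $Range(D+v)$ by $\phi(D(t)+v(x))=u(x,t)$, so the vNM uniqueness of $u$ shows $\phi$ is unique up to a positive affine transformation. The converse is a direct verification: the continuous integrand gives Axioms 0, 3, 4; strict monotonicity of $v$, $D$, $\phi$ gives Axioms 1, 2; additive separability of the induced order gives Axiom 5; convexity of $\phi$ gives SI by the same two-point inequality, now read via the majorization (Karamata) characterization; and (4') with Lemma \ref{concaveRATL} gives RATL.

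The hard part will be deriving convexity of $\phi$ from SI, because I cannot assume $\phi$ differentiable, so Proposition \ref{derivativeSI}, which used a Taylor expansion, is only heuristic here. The issue is realizability: to conclude convexity at an interior point $r_0$ of $Range(D+v)$ I need, for small $h>0$, a quadruple with $p=r_0-h$, $q=r_0+h$, $\alpha=\beta=r_0$, i.e.\ times $t_1<t_2$ with $D(t_1)-D(t_2)=h$ and prizes $x_1>x_2$ with $v(x_1)-v(x_2)=h$, anchored at a decomposition $r_0=D(\hat t)+v(\hat x)$. Restricted solvability, which holds because $v$ and $D$ are continuous and monotone on intervals, supplies such quadruples for all sufficiently small $h$, so SI gives midpoint convexity of $\phi$ on a neighborhood of each interior $r_0$; continuity upgrades local midpoint convexity to local convexity, and local convexity on the interval interior extends to convexity on all of $Range(D+v)$ by continuity. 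Getting this local-to-global step and the boundary behavior right---rather than the conjoint-measurement machinery, which is standard---is where the care is needed.
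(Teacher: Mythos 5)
Your proposal is correct, but it reaches the representation by a genuinely different route from the paper's, in two places. First, the order of construction: you apply von Neumann--Morgenstern to the whole domain $\Delta(X\times T)$ to get a continuous $u$, then apply Debreu's additive conjoint theorem to the degenerate lotteries to get $D+v$, and finally define $\phi$ as the strictly increasing link function with $u=\phi\circ(D+v)$ (continuity via the interval-range argument). The paper inverts this: it first gets $D+v$ on $X^{\ast}=\{\delta_{(x,t)}\}$ from Debreu/Fishburn, then pushes $\gtrsim$ forward to an induced preference $\gtrsim_{Y}$ on lotteries over $Y=Range(D+v)$ (whose well-definedness must be, and is, justified by Independence), and applies vNM there to produce $\phi$ together with its uniqueness. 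Your route avoids that well-definedness check; the paper's route makes the affine uniqueness of $\phi$ immediate. Second, and more substantively, the derivation of convexity of $\phi$ from SI differs. You extract pointwise local midpoint convexity at each interior $r_{0}$ of $Range(D+v)$ --- your realizability analysis (that the set $Range(D)\cap(r_{0}-Range(v))$ has positive length at interior points, so symmetric quadruples exist for all small $h$) is exactly right --- and then upgrade to global convexity by a continuity/local-to-global argument; this step, which you correctly flag as the delicate one, does go through via the standard maximum-principle argument (if $\phi$ exceeded a chord, the set of interior maximizers of $\phi$ minus the chord would be open by local midpoint convexity, closed by continuity, hence everything, contradicting the endpoint values). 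The paper instead proves a global ``increasing increments'' property: for each fixed $\varepsilon<\sup v-\inf v$, SI applied to prize pairs with $v(x_{1})=v(x_{2})+\varepsilon$ and varying times shows $\phi(b+\varepsilon)-\phi(b)$ is increasing in $b$; it then obtains $a\phi(c_{1})+(1-a)\phi(c_{2})\geq\phi(ac_{1}+(1-a)c_{2})$ for rational $a$ by an equal-spacing telescoping argument and passes to irrational $a$ by continuity. Your argument is more local and geometric; the paper's is discrete and global but needs its own (lightly treated) gluing of the monotonicity over the sliding intervals $Range(D)+v(x_{2})$. The remaining components --- Lemma \ref{concaveRATL} for (4'), the uniqueness bookkeeping, and the converse verification including SI from convexity via the two-point majorization inequality --- coincide with the paper's proof.
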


\begin{proof}
$\qquad \Longleftarrow $ Suppose we have such a representation of $\gtrsim $.
Completeness, Transitivity, and Independence are clearly satisfied. Since $%
\phi $ is strictly increasing and $D$ is strictly decreasing in time, we
have Impatience. Outcome Monotonicity follows from strictly increasing $\phi 
$ and strictly increasing $v$. Continuity of $\gtrsim $ follows from the
continuity of $D$ and $v$. By Lemma \ref{concaveRATL}, concavity of $%
\phi \left( D(\cdot )+v(x)\right) $ implies RATL.

To check Double Cancellation, let $x_{1},x_{2},x_{3}\in X$ and $%
t_{1},t_{2},t_{3}\in T$ such that $\delta _{(x_{1},t_{1})}\gtrsim \delta
_{(x_{2},t_{2})}$, and $\delta _{(x_{2},t_{3})}\gtrsim \delta
_{(x_{3},t_{1})}$. We notice that because of monotonicity of $\phi $, 
\begin{equation*}
\phi (D(t_{1})+v(x_{1}))\geq \phi (D(t_{2})+v(x_{2}))
\end{equation*}%
implies 
\begin{equation*}
D(t_{1})+v(x_{1})\geq D(t_{2})+v(x_{2}).
\end{equation*}%
Similarly, 
\begin{equation*}
D(t_{3})+v(x_{2})\geq D(t_{1})+v(x_{3})
\end{equation*}%
follows from $\delta _{(x_{2},t_{3})}\gtrsim \delta _{(x_{3},t_{1})}.$
Combining the last two equations, we get%
\begin{equation*}
D(t_{3})+v(x_{1})\geq D(t_{2})+v(x_{3}),
\end{equation*}%
which is equivalent to $\delta _{(x_{1},t_{3})}\gtrsim \delta
_{(x_{3},t_{2})}.$

The last property we want to check is SI. If $x_{1}>x_{2}$ and $t_{1}<t_{2}$%
, then 
\begin{equation*}
\lbrack
D(t_{1})+v(x_{1})]+[D(t_{2})+v(x_{2})]=[D(t_{1})+v(x_{2})]+[D(t_{2})+v(x_{1})].
\end{equation*}%
What's more, $[D(t_{1})+v(x_{1})]\geq \lbrack
D(t_{1})+v(x_{2})],[D(t_{2})+v(x_{1})]\geq \lbrack D(t_{2})+v(x_{2})]$.
Convexity of $\phi $ implies 
\begin{equation*}
\phi \left( D(t_{1})+v(x_{1})\right) +\phi \left( D(t_{2})+v(x_{2})\right)
\geq \phi \left( D(t_{1})+v(x_{2})\right) +\phi \left(
D(t_{2})+v(x_{1})\right) .
\end{equation*}%
$\qquad \Longrightarrow $ Suppose $\gtrsim $ satisfies Axioms 0-5, RATL, and
SI. Let $X^{\ast }$ denote $\left\{ \delta _{(x,t)}:(x,t)\in X\times T\right\} $, viewed as
a subset of $\Delta (X\times T)$. $X^{\ast }$ has the preference relation $%
\gtrsim _{X^{\ast}}$ inherited from $\Delta (X\times T)$. Then $\gtrsim _{X^{\ast}}$
satisfies Outcome Monotonicity, Impatience, Continuity and Double
Cancellation. By \cite{De59} or \cite{Fi70}, there exist a continuous
function $D$ on $T$ and a continuous function $v$ on $X$ such that $\gtrsim _{X^{\ast}}$ can
be represented by $D(t)+v(x)$. What's more, $D$ and $v$ are unique up to
positive linear transformations. Outcome Monotonicity and Impatience of $%
\gtrsim _{X^{\ast}}$ implies that $v$ is strictly increasing and $D$ is strictly
decreasing. 

We fix a choice of $D$ and $v$. Let $Y:=Range(D+v)$. Since $D$ and $v$ are
continuous functions on closed intervals, $Y$ is a connected interval. $\gtrsim $
induces a preference relation $\gtrsim _{Y}$ on $\Delta Y$, the set of simple
lotteries on $Y$, as follows. For $p\in \Delta (X\times T)$, define $%
p^{\prime }$ by 
\begin{equation*}
p^{\prime }(a)=p(\left\{ (x,t):D(t)+v(x)=a\right\} )
\end{equation*}%
for $a\in Y$. Let $p^{\prime }\gtrsim _{Y}q^{\prime }$ if and only if $%
p\gtrsim q$. Independence gurantees the well-definedness of $\gtrsim _{Y}$.
It can be easily checked that $\gtrsim _{Y}$ is a preference relation and
inherits Independence and Continuity from $\gtrsim $. Thus, there exists a
continuous function $\phi :Y\rightarrow \mathbb{R}$ such that $\gtrsim _{Y}$
can be represented by 
\begin{equation*}
V(p^{\prime })=\mathbb{E}_{p^{\prime }}\phi \text{.}
\end{equation*}%
What's more, $\phi $ is unique up to positive linear transformations. It
follows that $\gtrsim $ can be represented by 
\begin{equation*}
V(p)=\mathbb{E}_{p}[\phi \left( D(t)+v(x)\right) ].
\end{equation*}%
Since $v$ is strictly increasing, Outcome Monotonicity implies that $\phi $
is strictly increasing.

By Lemma \ref{concaveRATL}, RATL implies concavity of $\phi
\left( D(\cdot )+v(x)\right) $ for each $x\in X$.

We then demonstrate convexity of $\phi $ follows from SI. Let $A=\sup v-\inf v>0$,
possibly infinite. We first show that for any $0<\varepsilon <A$, $\phi
(b+\varepsilon )-\phi (b)$ is strictly increasing in $b$ whenever $%
b,b+\varepsilon \in Range(D+v)$. This is because by SI, for any $x_{1},x_{2}$
such that $v(x_{1})=v(x_{2})+\varepsilon $, we have 
\begin{equation*}
\phi (D(t_{1})+v(x_{1}))-\phi (D(t_{1})+v(x_{2}))\geq \phi
(D(t_{2})+v(x_{1}))-\phi (D(t_{2})+v(x_{2}))
\end{equation*}%
for any $t_{1}<t_{2}$. That is $\phi (\cdot +v(x_{2})+\varepsilon )-\phi
(\cdot +v(x_{2}))$ is increasing in $Range(D)$. As $x_{1},x_{2}$ ranges in $X
$, $v(x_{2})$ covers $Range(v)\cap (Range(v)-\varepsilon )$ so $\phi (\cdot
+\varepsilon )-\phi (\cdot )$ is increasing in $Range(D+v)\cap
(Range(D+v)-\varepsilon )$.

We claim that for any rational $a\in (0,1)$ and any $c_{2}>c_{1}$, $%
c_{1},c_{2}\in Range(D+v)$, we have 
\begin{equation*}
a\phi (c_{1})+(1-a)\phi (c_{2})\geq \phi (ct_{1}+(1-a)c_{2}).
\end{equation*}%
Let $a=\frac{p}{q}$, where $p,q$ are positive integers such that $\delta :=\frac{%
c_{2}-c_{1}}{q}<A$. Then $ac_{1}+(1-a)c_{2}=c_{1}+(q-p)\delta $, and 
\begin{equation*}
\phi (k\delta +c_{1})-\phi ((k-1)\delta +c_{1})
\end{equation*}%
is increasing in $k$. Thus, 
\begin{eqnarray*}
&&\frac{1}{q-p}\left[ \phi (ac_{1}+(1-a)c_{2})-\phi (c_{1})\right]  \\
&=&\frac{1}{q-p}\sum_{k=1}^{q-p}\left[ \phi (k\delta +c_{1})-\phi
((k-1)\delta +c_{1})\right]  \\
&\leq &\phi (ac_{1}+(1-a)c_{2})-\phi (ac_{1}+(1-a)c_{2}-\delta ) \\
&\leq &\frac{1}{p}\sum_{k=q-p+1}^{q}\left[ \phi (k\delta +c_{1})-\phi
((k-1)\delta +c_{1})\right]  \\
&=&\frac{1}{p}[\phi (c_{2})-\phi (ac_{1}+(1-a)c_{2}]
\end{eqnarray*}%
Reorganizing this, we get
\begin{equation*}
a\phi (c_{1})+(1-a)\phi (c_{2})\geq \phi (ac_{1}+(1-a)c_{2}).
\end{equation*}

We then remove the assumption that $a$ is rational in the last equation by
continuity of $\phi $. Suppose $a$ can be approximated by a sequence of
rationals $\left\{ a_{i}\right\} \subset (0,1)$. For each $i$, we have 
\begin{equation*}
a_{i}\phi (c_{1})+(1-a_{i})\phi (c_{2})\geq \phi (a_{i}c_{1}+(1-a_{i})c_{2}).
\end{equation*}%
Let $i\rightarrow \infty $. Since $\phi $ is continuous, we still have%
\begin{equation*}
a\phi (c_{1})+(1-a)D(c_{2})\geq \phi (ac_{1}+(1-a)c_{2})
\end{equation*}%
in the limit. In conclusion, $\phi $ is convex on $Range(D+v)$.
\end{proof}

\bigskip
Now we go back to the proof of Theorem \ref{multiplicative}.

$\Longrightarrow $ Suppose $\gtrsim $ satisfies Axioms 0-5, RATL, and SI. By
Lemma \ref{additive}, there exist $\phi ^{\ast }$, $D^{\ast }$ and $v^{\ast }
$ such that $\gtrsim $ can be represented by 
\begin{equation*}
V(p)=\mathbb{E}_{p}[\phi ^{\ast }\left( D^{\ast }(t)+v^{\ast }(x)\right) ]
\end{equation*}%
such that properties (1')-(4') in the statement of that lemma hold. We let 
\begin{eqnarray*}
\phi  &=&\phi ^{\ast }\circ \ln  \\
D &=&\exp (D^{\ast }) \\
v &=&\exp (v^{\ast })
\end{eqnarray*}%
then 
\begin{equation*}
V(p)=\mathbb{E}_{p}[\phi \left( D(t)v(x)\right) ],
\end{equation*}%
(1)-(4) in Theorem \ref{multiplicative} hold. The uniqueness of $D,v,$ and $%
\phi $ also follows from the uniqueness of $D^{\ast },v^{\ast },$ and $\phi
^{\ast }$.

$\Longleftarrow $ Conversely, we assume $\gtrsim $ can be represented by 
\begin{equation*}
V(p)=\mathbb{E}_{p}[\phi \left( D(t)v(x)\right) ]
\end{equation*}%
such that (1)-(4) in Theorem \ref{multiplicative} are true. We set 
\begin{eqnarray*}
\phi ^{\ast } &=&\phi \circ \exp  \\
D^{\ast } &=&\ln D \\
v^{\ast } &=&\ln v.
\end{eqnarray*}%
Then (1')-(4') in Lemma \ref{additive} follow from (1)-(4). Hence by Lemma %
\ref{additive}, Axioms 0-5, RATL, and SI are satisfied.

\bibliographystyle{aer}
\bibliography{main.bbl}

\end{document}